\newtheorem{theorem}{Theorem}
\newtheorem{lemma}{Lemma}
\newtheorem{remark}{Remark}
\begin{document}

\title{A Tighter Upper Bound of the Expansion Factor for Universal Coding of Integers and Its Code Constructions}

\author{Wei Yan,
        Sian-Jheng Lin,~\IEEEmembership{Member,~IEEE}
        \thanks{Yan and Lin are with the School of Information Science and Technology, University of Science and Technology of China (USTC), China. (e-mail: yan1993@mail.ustc.edu.cn; sjlin@ustc.edu.cn).}
}

\maketitle
\begin{abstract}
In entropy coding, universal coding of integers~(UCI) is a binary universal prefix code, such that the ratio of the expected codeword length to $\max\{1, H(P)\}$ is less than or equal to a constant expansion factor $K_{\mathcal{C}}$ for any probability distribution $P$, where $H(P)$ is the Shannon entropy of $P$. $K_{\mathcal{C}}^{*}$ is the infimum of the set of expansion factors.
The optimal UCI is defined as a class of UCI possessing the smallest $K_{\mathcal{C}}^{*}$. Based on prior research, the range of $K_{\mathcal{C}}^{*}$ for the optimal UCI is $2\leq K_{\mathcal{C}}^{*}\leq 2.75$. Currently, the code constructions achieve $K_{\mathcal{C}}=2.75$ for UCI and $K_{\mathcal{C}}=3.5$ for asymptotically optimal UCI.
In this paper, we propose a class of UCI, termed $\iota$ code, to achieve $K_{\mathcal{C}}=2.5$. This further narrows the range of $K_{\mathcal{C}}^{*}$ to $2\leq K_{\mathcal{C}}^{*}\leq 2.5$.
Next, a family of asymptotically optimal UCIs is presented, where their expansion factor infinitely approaches $2.5$.
Finally, a more precise range of $K_{\mathcal{C}}^{*}$ for the classic UCIs is discussed.
\end{abstract}

\section{Introduction}
In entropy coding, when the probability distribution of sources is unknown and difficult to measure, some entropy coding, such as arithmetic coding~\cite{AC79,AC84} and Huffman coding~\cite{H52}, cannot be applied to compress the source. In this case, universal source coding~\cite{USC} is a common way to encode the data, and LZ series algorithms~\cite{LZ77,LZ78,LZW} is one of the well-known algorithms of universal source coding.
However, there is no universal source coding for infinite alphabet and discrete memoryless sources~\cite{No94}.
Universal coding of integers~(UCI) is a universal code for infinite alphabet and discrete memoryless sources.
UCIs have been applied in widespread applications, such as unbounded search problems~\cite{BY76,AHK97}, inverted file index~\cite{06}, inductive inference~\cite{DCC21} and biological sequencing data compression~\cite{DNA10,DNA13}.

Prefix coding is a class of variable-length code that no codeword is a prefix of any other codeword.
Binary coding means that the coding alphabet is $\{0,1\}$.
Elias~\cite{Elias75} defined UCI as a binary universal prefix code, such that the ratio of the expected codeword length to $\max\{1,H(P)\}$ is less than or equal to a constant expansion factor $K_{\mathcal{C}}$ for any probability distribution $P$, where $H(P)$ is the Shannon entropy of $P$.
Many UCIs have been proposed and most of they can be divided into the following two categories~\cite{C1990,AS17} (For example, \emph{group strategy}~\cite{A1993} is the exception).
\begin{enumerate}
\item \emph{message length strategy:}  This strategy is to encode a positive integer $n$ into two parts.
The suffix part of length $L$ represents $n$, and the prefix part standing for $L$ (The prefix part can be further subdivided).
The coding of this strategy was proposed in~\cite{Elias75,L68,ER78,S80,Y00,yan}.
\item \emph{flag strategy:}  This strategy is to select a special sequence, called flag, to determine the end of a codeword. The flag is not allowed to appear within a codeword.
The coding of this strategy was proposed in~\cite{L81,AF87,W88,YO91,AS17}.
\end{enumerate}

Recently, Yan and Lin~\cite{yan} first studied the range of $K_{\mathcal{C}}$. First, the authors defined \emph{optimal UCI}, which is a class of UCI with the smallest $K_{\mathcal{C}}^{*}\triangleq\inf\{K_{\mathcal{C}}\}$. It is showed that the optimal UCI is in the range $2\leq K_{\mathcal{C}}^{*}\leq 2.75$, where $K_{\mathcal{C}}^{*}= 2.75$ is achieved by $\eta$ code~\cite{yan}. In particular, for the asymptotically optimal UCI, the smallest expansion factor is $K_{\mathcal{C}}=3.5$, which is achieved by $\theta$ code~\cite{yan} and Elias $\omega$ code~\cite{Elias75}.

In this paper, we further narrow the range of $K_{\mathcal{C}}^{*}$ of the optimal UCI. The contributions of this paper are listed below.
\begin{enumerate}
\item A class of UCI, but not asymptotically optimal, with $K_{\mathcal{C}}=2.5$ is presented. This reduces the upper bound of $K_{\mathcal{C}}^{*}$ from $2.75$ to $2.5$.
\item A family of asymptotically optimal UCIs is proposed, where $K_{\mathcal{C}}$ infinitely approaches $2.5$.
\item The range of $K_{\mathcal{C}}^{*}$ for some classic UCIs is discussed (see Table \ref{tab3}).
\end{enumerate}

In the rest of this paper, Section \ref{pre} introduces some background knowledge.
Section \ref{sec_la} presents the main theorem of this paper.
Section \ref{sec_lb} proposes a class of UCI to achieve $K_{\mathcal{C}}=2.5$.
A family of asymptotically optimal UCIs is proposed in Section \ref{sec_new1}.
Section \ref{sec_new} gives a more precise range of $K_{\mathcal{C}}^{*}$ for the classic UCIs.
Section \ref{sec_con} concludes this work.

\section{Preliminaries}\label{pre}
\subsection{The definitions of UCI and asymptotically optimal UCI}
Elias~\cite{Elias75} treated the coding problem as follows.
Let $\mathcal{C}$ be a given binary prefix coding of the positive integers $\mathcal{N}\triangleq\{1,2,\cdots,m,\cdots\}$.
Let $L_{\mathcal{C}}(\cdot)$ denote the length function of $\mathcal{C}$ (i.e., $L_{\mathcal{C}}(m)=|\mathcal{C}(m)|$, for all $m\in\mathcal{N}$).
Let $P$ denote any probability distribution of $\mathcal{N}$ (i.e., $\sum_{n=1}^{\infty}P(n)=1$, and $P(m)\geq 0$, for all $m\in \mathcal{N}$).
In UCI, the source meets the probability distribution
\begin{equation}\label{eq1}
P(m)\geq P(m+1),
\end{equation}
for all $m\in \mathcal{N}$. Let $E_{P}(L_{\mathcal{C}})=\sum_{n=1}^{\infty}L_{\mathcal{C}}(n)P(n)$ be the expected codeword length for $\mathcal{C}$,
and let $H(P)=-\sum_{n=1}^{\infty}P(n)\log_{2}P(n)$ denote the entropy of $P$.
Elias~\cite{Elias75} defined $\mathcal{C}$ to be \emph{universal} if there is a constant $K_{\mathcal{C}}$ such that
\begin{equation}\label{eq2}
\frac{E_{P}(L_{\mathcal{C}})}{\max\{1,H(P)\}}\leq K_{\mathcal{C}},
\end{equation}
for all $P$ with finite entropy, where $K_{\mathcal{C}}$ is the expansion factor.
Furthermore, $\mathcal{C}$ is called \emph{asymptotically optimal} if $\mathcal{C}$ is universal and a function $R_{\mathcal{C}}(\cdot)$ exists such that
\begin{equation}
\lim\limits_{H(P)\to +\infty}R_{\mathcal{C}}(H(P))=1,
\end{equation}
and
\begin{equation}
\frac{E_{P}(L_{\mathcal{C}})}{\max\{1,H(P)\}}\leq R_{\mathcal{C}}(H(P)),
\end{equation}
for all $P$ with finite entropy.
\subsection{Some classic UCIs}
In this subsection, we briefly introduce five classic UCIs, termed $\gamma$ code, $\delta$ code, $\omega$ code, $\eta$ code, and $\theta$ code. For the specific structure of classic UCIs, please refer to~\cite{Elias75,yan}. First, the codeword lengths and the range of $K_{\mathcal{C}}^{*}$ of these UCIs are listed in Table \ref{tab5}.
And the five classic UCIs all satisfy $L_{\mathcal{C}}(1)=1$.
Next, the following theorem can be used to judge whether a UCI is asymptotically optimal.
\begin{table}[t]
\begin{threeparttable}
\centering
\caption{The codeword lengths and ranges of $K_{\mathcal{C}}^{*}$ of some classic UCIs}\label{tab5}
\begin{tabular}{|c|c|c|c|}
\hline
Code  & The codeword lengths for $2\leq m\in \mathcal{N}$ & The range of $K_{\mathcal{C}}^{*}$ &  Asymptotically optimal   \\
\hline
$\gamma$ code  &  $L_{\gamma}(m)=1 + 2\lfloor \log_{2}m\rfloor$                                                       & $K_{\gamma}^{*}=3$         &  No   \\
$\delta$ code  &  $L_{\delta}(m)=1 + \lfloor \log_{2}m\rfloor + 2\lfloor \log_{2}(1+\lfloor \log_{2}m\rfloor)\rfloor$ & $2.5 \leq K_{\delta}^{*}\leq 4$  &  Yes  \\
$\omega$ code  &  $L_{\omega}(m)=1+\sum_{n=1}^{s}(\lambda^{n}(m)+1)$ \ \tnote{1}                                        &  $2.1 < K_{\omega}^{*}\leq 3.5 $     &  Yes  \\
$\eta$ code    &  $L_{\eta}(m)=3+\lfloor\log_2(m-1)\rfloor+\lfloor \frac{\lfloor\log_2(m-1)\rfloor}{2} \rfloor$       & $2.5 \leq K_{\eta}^{*}\leq 2.75$ &  No   \\
$\theta$ code  &  $L_{\theta}(m)=3+\lfloor \log_{2}m\rfloor+\lfloor\log_{2}\lfloor \log_{2}m\rfloor \rfloor+\lfloor\frac{\lfloor\log_{2}\lfloor \log_{2}m\rfloor \rfloor}{2}\rfloor$ & $ 2.5 \leq K_{\theta}^{*}\leq 3.5$ &  Yes  \\
\hline
\end{tabular}
\begin{tablenotes}
\footnotesize
\item[1] $\lambda(m)\triangleq \lfloor \log_{2}m\rfloor$, $\lambda^{n}$ is the $n$-fold compositions of function $\lambda$, and $s=s(m)\in \mathcal{N}$ is a uniquely integer satisfying $\lambda^{s}(m)=1$.
\end{tablenotes}
\end{threeparttable}
\end{table}
\begin{theorem}~\cite{Elias75,L81}\label{thm3}
Given a UCI $\mathcal{C}$, the function $L_{\mathcal{C}}(\cdot)$ satisfies $L_{\mathcal{C}}(m)\geq c+b\lfloor \log_{2}m \rfloor$ for all $m\in \mathcal{N}$,
where $b$ is a constant greater than $1$ and $c$ is a constant.
Then, $\mathcal{C}$ is not asymptotically optimal.
\end{theorem}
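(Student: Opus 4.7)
The plan is to prove non-optimality by exhibiting a family of admissible source distributions $\{P_N\}_{N\in\mathcal{N}}$ along which $H(P_N)\to\infty$ yet the ratio $E_{P_N}(L_{\mathcal{C}})/\max\{1,H(P_N)\}$ stays bounded below by a constant strictly larger than $1$. If such a family exists, no function $R_{\mathcal{C}}(\cdot)$ with $\lim_{H\to\infty}R_{\mathcal{C}}(H)=1$ can upper-bound the ratio for all large $H$, which by definition means $\mathcal{C}$ fails to be asymptotically optimal.

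The natural candidate is the uniform distribution on an initial segment of $\mathcal{N}$: set $P_N(m)=1/N$ for $1\le m\le N$ and $P_N(m)=0$ otherwise. This distribution is monotone non-increasing, so it satisfies the UCI source constraint \eqref{eq1}, and its entropy equals $H(P_N)=\log_2 N$, which tends to infinity with $N$.

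For this family, the hypothesis $L_{\mathcal{C}}(m)\ge c+b\lfloor\log_2 m\rfloor$ yields
\begin{equation*}
E_{P_N}(L_{\mathcal{C}})=\frac{1}{N}\sum_{m=1}^{N}L_{\mathcal{C}}(m)\geq c+\frac{b}{N}\sum_{m=1}^{N}\lfloor\log_2 m\rfloor.
\end{equation*}
A standard dyadic-block count (tallying how many $m\in\{1,\ldots,N\}$ lie in each interval $[2^k,2^{k+1})$), or summation-by-parts, gives $\sum_{m=1}^{N}\lfloor\log_2 m\rfloor=N\log_2 N-\Theta(N)$. Hence $E_{P_N}(L_{\mathcal{C}})\ge b\log_2 N - O(1)$, and dividing by $H(P_N)=\log_2 N$ produces $E_{P_N}(L_{\mathcal{C}})/H(P_N)\ge b - O(1/\log_2 N)$, whose limit as $N\to\infty$ is $b>1$. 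This is incompatible with any putative bound $R_{\mathcal{C}}(\log_2 N)\to 1$.

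The only subtle point is recognizing that the uniform distribution on $\{1,\ldots,N\}$ is the right extremal object: it simultaneously respects the monotonicity constraint, drives the entropy to infinity, and puts equal mass on the long tail of integers where the hypothesis forces $L_{\mathcal{C}}(m)$ to grow like $b\log_2 m$. The main calculational care is to separate the leading $N\log_2 N$ term in $\sum_{m=1}^{N}\lfloor\log_2 m\rfloor$ from the $O(N)$ correction cleanly, so that after division by $N\log_2 N$ the correction vanishes and the surviving constant $b>1$ delivers the contradiction with asymptotic optimality.
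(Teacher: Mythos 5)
The paper does not actually prove Theorem~\ref{thm3}; it imports the result from \cite{Elias75,L81}, so there is no in-paper argument to compare against. Your proof is correct and is the classical one: the uniform distributions on $\{1,\dots,N\}$ satisfy the monotonicity constraint \eqref{eq1}, have entropy $\log_2 N\to\infty$, and via the dyadic count $\sum_{m=1}^{N}\lfloor\log_2 m\rfloor = N\log_2 N-\Theta(N)$ force $E_{P_N}(L_{\mathcal{C}})/H(P_N)\to b>1$, which is incompatible with any $R_{\mathcal{C}}$ tending to $1$; all steps check out.
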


\section{The main theorem}\label{sec_la}
In this section, we present the main theorem of this paper.
First, a related lemma is provided, then the theorem is given.
\begin{lemma}\label{lemma1}
Given an any probability distribution $P=(P(1),P(2),\cdots,P(m),\cdots)$, then
\begin{itemize}
\item[(1)]$H(P)\geq-\log_{2}P(1)$;
\item[(2)]If $H(P)<1$, then $P(1)>\frac{1}{2}$.
\end{itemize}
\end{lemma}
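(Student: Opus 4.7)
The plan is to use the monotonicity assumption on $P$ inherited from the UCI setting (equation (1)), namely $P(m)\geq P(m+1)$ for all $m$, so that $P(1)$ is the maximum mass. Part (2) will then fall out of part (1) by an immediate logarithmic manipulation.

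For part (1), I would first observe that $P(1)>0$, since otherwise monotonicity forces $P(n)=0$ for all $n$, contradicting $\sum_n P(n)=1$. Hence $-\log_2 P(1)$ is well defined. Then, restricting the entropy sum to the support $S=\{n:P(n)>0\}$ (with the usual convention $0\log_2 0=0$), I would use $P(n)\leq P(1)$ on $S$, so that $-\log_2 P(n)\geq -\log_2 P(1)$ for every $n\in S$. Multiplying by $P(n)\geq 0$ and summing gives
\begin{equation*}
H(P)=\sum_{n\in S}-P(n)\log_2 P(n)\geq \bigl(-\log_2 P(1)\bigr)\sum_{n\in S}P(n)=-\log_2 P(1),
\end{equation*}
which is the desired bound.

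For part (2), I would simply chain inequalities: by part (1), $-\log_2 P(1)\leq H(P)<1$, so $\log_2 P(1)>-1$, and exponentiating yields $P(1)>1/2$.

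There is no substantive obstacle here; the only subtlety is making sure the monotonicity hypothesis from the UCI setup is invoked to justify $P(n)\leq P(1)$ for all $n$, and handling zero-probability terms via the standard convention so the logarithms stay legitimate. The argument is short and essentially bookkeeping.
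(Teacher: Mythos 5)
Your proof is correct and follows essentially the same route as the paper: part (1) bounds each term using $P(n)\leq P(1)$ (from the monotonicity assumption) and sums, and part (2) chains $-\log_2 P(1)\leq H(P)<1$. Your extra care about $P(1)>0$ and zero-probability terms is a minor tightening of the same argument.
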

\begin{proof}
\begin{itemize}
\item[(1)]\begin{equation}
\begin{aligned}
H(P)&=\sum_{n=1}^{\infty}P(n)\log_{2}\frac{1}{P(n)}     \\
    &\geq \sum_{n=1}^{\infty}P(n)\log_{2}\frac{1}{P(1)}  \\
    &=-\log_{2}P(1).   \\
\end{aligned}
\end{equation}
\item[(2)] When $H(P)<1$, then
\begin{equation}
-\log_{2}P(1)\leq H(P)<1\Rightarrow P(1)>\frac{1}{2}.
\end{equation}
\end{itemize}
\end{proof}
\begin{theorem}\label{thm4}
Given a prefix code $\mathcal{C}$, the function $L_{\mathcal{C}}(\cdot)$ satisfies $L_{\mathcal{C}}(1)=1$ and $L_{\mathcal{C}}(m)\leq b+1+b\lfloor\log_{2}m\rfloor$ for all $2\leq m\in \mathcal{N}$,
where the constant $b$ is in the range $1\leq b\leq \frac{9}{4}$.
Then,
\begin{equation}
\frac{E_{P}(L_{\mathcal{C}})}{\max\{1,H(P)\}}\leq b+1;
\end{equation}
that is, $\mathcal{C}$ is a UCI and $K_{\mathcal{C}}^{*}\leq b+1$.
\end{theorem}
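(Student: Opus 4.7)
The plan is to split the analysis into the two cases $H(P)\ge 1$ and $H(P)<1$ (the latter forcing $P(1)>1/2$ by Lemma~\ref{lemma1}(2)), after first deriving a single master inequality relating $E_P(L_\mathcal{C})$, $H(P)$ and $P(1)$. Since $P$ is non-increasing with total mass~$1$, we have $nP(n)\le\sum_{k=1}^{n}P(k)\le 1$, i.e.\ $P(n)\le 1/n$, whence
\[
\lfloor\log_2 n\rfloor \le \log_2 n \le -\log_2 P(n)
\]
for every $n\ge 1$. Multiplying by $P(n)$ and summing over $n\ge 2$ yields
\[
\sum_{n\ge 2}\lfloor\log_2 n\rfloor P(n) \le -\sum_{n\ge 2} P(n)\log_2 P(n) = H(P) + P(1)\log_2 P(1).
\]
Plugging this, $L_\mathcal{C}(1)=1$, and the hypothesis $L_\mathcal{C}(m)\le b+1+b\lfloor\log_2 m\rfloor$ into the definition of $E_P(L_\mathcal{C})$ gives, after simplification, the master inequality
\[
E_P(L_\mathcal{C}) \le (b+1) + b\,H(P) - b\,f(P(1)),\qquad f(x):=x\bigl(1-\log_2 x\bigr). \quad(\star)
\]

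A short calculus check on the scalar function $f$ shows $f(1/2)=f(1)=1$ with a single interior maximum at $x=2^{1-1/\ln 2}\approx 0.737$, so $f(x)\ge 1$ on $[1/2,1]$. If $H(P)<1$, Lemma~\ref{lemma1}(2) gives $P(1)>1/2$, whence $f(P(1))\ge 1 > H(P)$ and $(\star)$ collapses to $E_P(L_\mathcal{C})\le b+1$, as desired. When $H(P)\ge 1$, the target reduces via $(\star)$ to showing $(b+1)-b\,f(P(1))\le H(P)$; the subcase $P(1)>1/2$ is again disposed of by $f(P(1))\ge 1$, giving LHS $\le 1\le H(P)$. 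What remains is the subcase $P(1)\le 1/2$, where Lemma~\ref{lemma1}(1) supplies the stronger lower bound $H(P)\ge -\log_2 P(1)\ge 1$, so it suffices to prove the sharper inequality $(b+1)-b\,f(P(1))\le -\log_2 P(1)$.

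The main obstacle is precisely this last inequality, which is where the constraint $b\le 9/4$ enters. Substituting $y=-\log_2 P(1)\ge 1$ (so $P(1)=2^{-y}$ and $f(P(1))=2^{-y}(1+y)$), the target becomes
\[
g(y):=y + b\cdot 2^{-y}(1+y) - (b+1) \ge 0 \quad\text{for all } y\ge 1.
\]
One checks $g(1)=0$, so it suffices to establish $g'(y)\ge 0$ on $[1,\infty)$. Differentiating,
\[
g'(y) = 1 - b\cdot 2^{-y}\bigl[(1+y)\ln 2 - 1\bigr],
\]
and the function $h(y):=2^{-y}[(1+y)\ln 2 - 1]$ is easily seen to attain its global maximum over $y\ge 1$ at $y^\star=2/\ln 2 - 1\approx 1.885$ with value $h(y^\star)=2/e^2\approx 0.271$. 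Hence $g'(y)\ge 1 - 2b/e^2$, which is non-negative whenever $b\le e^2/2\approx 3.69$; the hypothesis $b\le 9/4$ lies well within this range. Combining the three subcases with $(\star)$ yields $E_P(L_\mathcal{C})/\max\{1,H(P)\}\le b+1$, which proves the theorem.
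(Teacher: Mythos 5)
Your proof is correct, and while it shares the paper's skeleton --- the same master inequality $E_P(L_{\mathcal C})\le (b+1)+bH(P)-b\,P(1)(1-\log_2P(1))$ obtained from $P(n)\le 1/n$, and the same three-way case split on $H(P)\gtrless 1$ and $P(1)\gtrless \tfrac12$ --- it handles the crux (the case $H(P)\ge 1$, $P(1)\le\tfrac12$) by a genuinely different and cleaner argument. The paper works directly with $g_2(x)=\frac{b+1-bx+bx\log_2x}{-\log_2x}$ on $(0,\tfrac12)$ and proves $g_2'>0$ through a cascade $f,h,h',h''$, numerically bracketing a root $x_0\in(0.19,0.24)$ and invoking $b\le\frac94$ in several ad hoc estimates; this is exactly where the paper's Remark concedes the bound on $b$ is not tight. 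You instead substitute $y=-\log_2P(1)\ge 1$, reduce the target to $g(y)=y+b\,2^{-y}(1+y)-(b+1)\ge 0$, observe $g(1)=0$, and bound $g'$ from below using the closed-form maximum $\max_{y\ge1}2^{-y}\bigl[(1+y)\ln 2-1\bigr]=2/e^2$ (attained at $y^\star=2/\ln 2-1$, which indeed lies in $[1,\infty)$). This avoids all numerical root-hunting and, as a bonus, shows the conclusion actually holds for all $0< b\le e^2/2\approx 3.69$, strictly improving the paper's admissible range $1\le b\le\frac94$ (the lower bound $b\ge 1$ is never used in your argument either). The only points worth making explicit in a write-up are the routine verification that $f(x)=x(1-\log_2x)$ is concave with $f(\tfrac12)=f(1)=1$, hence $\ge 1$ on $[\tfrac12,1]$, and the sign analysis of $h'(y)=2^{-y}\ln 2\,\bigl[2-(1+y)\ln 2\bigr]$ confirming $y^\star$ is a global maximum; both are immediate.
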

\begin{proof}
Due to
\begin{equation}
mP(m)\leq\sum_{n=1}^{m}P(n)\leq \sum_{n=1}^{\infty}P(n)=1,
\end{equation}
we have $m\leq\frac{1}{P(m)}$ for all $m\in \mathcal{N}$.
Thus, we obtain
\begin{equation}\label{eq9}
\begin{aligned}
\sum_{n=2}^{\infty}P(n)\log_{2}n &\leq \sum_{n=2}^{\infty}P(n)\log_{2}\frac{1}{P(n)}    \\
& = H(P)+P(1)\log_{2}P(1).
\end{aligned}
\end{equation}
The expected codeword length is
\begin{equation}
\begin{aligned}
  E_{P}(L_{\mathcal{C}})&\leq P(1)+\sum_{n=2}^{\infty}P(n)(b+1+b\lfloor\log_{2}n\rfloor) \\
             &= b+1-bP(1)+b\sum_{n=2}^{\infty}P(n)\lfloor\log_{2}n\rfloor \\
             &\leq b+1-bP(1)+b\sum_{n=2}^{\infty}P(n)\log_{2}n\\
             &\leq b+1-bP(1)+bH(P)+bP(1)\log_{2}P(1). \\
\end{aligned}
\end{equation}
We consider three cases below.
\begin{enumerate}
\item Case $H(P)<1$: In this case, we obtain $P(1)>\frac{1}{2}$ from Lemma \ref{lemma1}. Further, we have
\begin{equation}
\begin{aligned}
\frac{E_{P}(L_{\mathcal{C}})}{\max\{1,H(P)\}}&\leq b+1-bP(1)+bH(P)+bP(1)\log_{2}P(1)   \\
                                             &\leq 2b+1-bP(1)+bP(1)\log_{2}P(1).     \\
\end{aligned}
\end{equation}
Let $g_{1}(x)\triangleq 2b+1-bx+bx\log_{2}x$. We only need to prove that $g_{1}(x)\leq b+1$ over interval $[\frac{1}{2},1]$.
We know that the curve of $g_{1}$ is U-shaped over interval $[\frac{1}{2},1]$ by its derivative.
Thus, we have $g_{1}(x)\leq\max\{g_{1}(\frac{1}{2}),g_{1}(1)\}=b+1$ over interval $[\frac{1}{2},1]$.
\item Case $H(P)\geq1$ and $P(1)\geq0.5$: In this case, we have
\begin{equation}
\begin{aligned}
\frac{E_{P}(L_{\mathcal{C}})}{\max\{1,H(P)\}}&\leq\frac{b+1-bP(1)+bH(P)+bP(1)\log_{2}P(1)}{H(P)}    \\
                                             & = b+\frac{b+1-bP(1)+bP(1)\log_{2}P(1)}{H(P)}    \\
                                             & \leq 2b+1-bP(1)+bP(1)\log_{2}P(1)     \\
                                             & \overset{(a)}{\leq} b+1.     \\
\end{aligned}
\end{equation}
where $(a)$ is due to $g_{1}(x)\leq b+1$ over interval $[\frac{1}{2},1]$.
\item Case $H(P)\geq1$ and $P(1)<0.5$: In this case, we obtain
\begin{equation}
\begin{aligned}
\frac{E_{P}(L_{\mathcal{C}})}{\max\{1,H(P)\}}&\leq b+\frac{b+1-bP(1)+bP(1)\log_{2}P(1)}{H(P)}    \\
                                             &\overset{(a)}{\leq} b+\frac{b+1-bP(1)+bP(1)\log_{2}P(1)}{-\log_{2}P(1)}.    \\
\end{aligned}
\end{equation}
where $(a)$ is due to Lemma \ref{lemma1}. Let
\begin{equation}
\begin{aligned}
g_{2}(x)&\triangleq \frac{b+1-bx+bx\log_{2}x}{-\log_{2}x}    \\
        & =\ln{2} \cdot \frac{bx-b-1}{\ln{x}}-bx.    \\
\end{aligned}
\end{equation}
We need to prove that $g_{2}(x)\leq 1$ over interval $(0,\frac{1}{2})$.
We first prove that $g_{2}'(x)>0$ over interval $(0,\frac{1}{2})$.
Due to
\begin{equation}
g_{2}'(x)=\ln 2\cdot\frac{b\ln x-b+\frac{b+1}{x}}{(\ln x)^{2}}-b,
\end{equation}
then $g_{2}'(x)>0$ over interval $(0,\frac{1}{2})$ is equivalent to
$f(x)>\frac{1}{\ln 2}$ over interval $(0,\frac{1}{2})$, where
\begin{equation}
f(x)\triangleq\frac{\ln x-1+\frac{b+1}{bx}}{(\ln x)^{2}}.
\end{equation}
Finally, we obtain
\begin{equation}
\begin{aligned}
f'(x)&=\frac{-\ln x}{x^{2}(\ln x)^{4}}h(x)    \\
     &\triangleq \frac{-\ln x}{x^{2}(\ln x)^{4}}\left(x\ln x+\frac{b+1}{b}\ln x-2x+\frac{2b+2}{b}\right),   \\
h'(x)&=\ln x +\frac{b+1}{bx}-1,       \\
h''(x)&=\frac{1}{x^{2}}\left(x-\frac{b+1}{b}\right).  \\
\end{aligned}
\end{equation}
Due to $h''(x)<0$ over interval $(0,\frac{1}{2})$, we have
\begin{equation}
\begin{aligned}
h'(x)&>h'(\frac{1}{2})   \\
     &=\ln \frac{1}{2} +\frac{2b+2}{b}-1   \\
     &> -\ln 2-1+2   \\
     &>0,      \\
\end{aligned}
\end{equation}
over interval $(0,\frac{1}{2})$. Thus, $h(x)$ strictly increases over interval $(0,\frac{1}{2})$.
Due to
\begin{equation}
\begin{aligned}
h(0.19)&=0.19\ln 0.19+\frac{b+1}{b}(2+\ln 0.19)-2\times0.19  \\
       &\leq0.19\ln 0.19+2\times(2+\ln 0.19)-0.38   \\
       &<0       \\
\end{aligned}
\end{equation}
and
\begin{equation}
\begin{aligned}
h(0.24)&=0.24\ln 0.24+\frac{b+1}{b}(2+\ln 0.24)-2\times0.24  \\
       &\geq0.24\ln 0.24+\frac{13}{9}\times(2+\ln 0.24)-0.48  \\
       &>0,       \\
\end{aligned}
\end{equation}
there exists $x_{0}\in(0.19,0.24)$ such that $h(x_{0})=0$. Further, we have
$h(x)<0$ and $f'(x)<0$ over interval $(0,x_{0})$, $h(x)>0$ and $f'(x)>0$ over interval $(x_{0},\frac{1}{2})$.
And hence, $f(x)$ strictly decreases over interval $(0,x_{0})$ and $f(x)$ strictly increases over interval $(x_{0},\frac{1}{2})$.
Thus, we obtain
\begin{equation}
\begin{aligned}
f(x)&\geq f(x_{0})    \\
    &=\frac{1}{\ln x_{0}}-\frac{1}{(\ln x_{0})^{2}}+\frac{b+1}{b}\cdot\frac{1}{x_{0}(\ln x_{0})^{2}}  \\
    &>\frac{1}{\ln 0.24}-\frac{1}{(\ln 0.24)^{2}}+\frac{13}{9}\times\frac{1}{0.19(\ln 0.19)^{2}}   \\
    &>\frac{1}{\ln 2}      \\
\end{aligned}
\end{equation}
over interval $(0,\frac{1}{2})$.
Since $g_{2}'(x)>0$ over interval $(0,\frac{1}{2})$, we obtain
\begin{equation}
\begin{aligned}
g_{2}(x)&<g_{2}(\frac{1}{2})    \\
        & =\ln{2} \cdot \frac{\frac{b}{2}-b-1}{-\ln 2}-\frac{b}{2}   \\
        & =1,  \\
\end{aligned}
\end{equation}
for all $x\in(0,\frac{1}{2})$.
\end{enumerate}
The proof is completed.
\end{proof}
\begin{remark}
In Theorem \ref{thm4}, the feasible range $1\leq b\leq \frac{9}{4}$ is not tight.
The upper bound of $b$ is taken to be $\frac{9}{4}$ for the convenience of proving that
$f(x)>\frac{1}{\ln 2}$ over interval $(0,\frac{1}{2})$.
\end{remark}

When $b=1$ in Theorem \ref{thm4}, the theoretical lower bound of $K_{\mathcal{C}}^{*}$ of the optimal UCI in~\cite{yan} can be obtained.
In fact, there is no such prefix code when $1\leq b<1.5$.
\begin{theorem} \label{thm5}
There is no prefix code $\mathcal{C}$ such that $L_{\mathcal{C}}(1)=1$ and $L_{\mathcal{C}}(m)\leq b+1+b\lfloor\log_{2}m\rfloor$ for all $2\leq m\in \mathcal{N}$,
where $b$ is a constant less than $\frac{3}{2}$.
\end{theorem}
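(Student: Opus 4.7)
The plan is to derive a contradiction from Kraft's inequality. Any binary prefix code $\mathcal{C}$ on $\mathcal{N}$ satisfies $\sum_{m=1}^{\infty}2^{-L_{\mathcal{C}}(m)}\leq 1$, so I would assume such a $\mathcal{C}$ exists for some $b<\tfrac{3}{2}$ and show that the Kraft sum, lower bounded using the hypothesized length constraints, already strictly exceeds $1$.

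First I would partition $\mathcal{N}\setminus\{1\}$ into dyadic blocks $B_{k}=\{m\in\mathcal{N}:\lfloor\log_{2}m\rfloor=k\}$ for $k\geq 1$, each of size $2^{k}$. Inside $B_{k}$ the hypothesis gives $L_{\mathcal{C}}(m)\leq b+1+bk$; since $L_{\mathcal{C}}(m)$ is an integer this sharpens to $L_{\mathcal{C}}(m)\leq\lfloor b+1+bk\rfloor$. Using $b<\tfrac{3}{2}$, I would evaluate this floor by parity of $k$: for $k=2j+1$ odd, $b+1+bk=2b(j+1)+1<3j+4$, so $L_{\mathcal{C}}(m)\leq 3j+3$; for $k=2j$ even with $j\geq 1$, $b+1+bk=b(2j+1)+1<3j+\tfrac{5}{2}$, so $L_{\mathcal{C}}(m)\leq 3j+2$. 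In either case the total contribution of $B_{k}$ to the Kraft sum is at least $2^{-(j+2)}$ (the factors $2^{k}$ from block size and $2^{-\lfloor b+1+bk\rfloor}$ telescope identically in the two cases). Summing over $k\geq 1$ and adding the $m=1$ term $2^{-1}$ yields
\begin{equation*}
\sum_{m=1}^{\infty}2^{-L_{\mathcal{C}}(m)}\;\geq\;\frac{1}{2}+\frac{1}{4}+\sum_{j=1}^{\infty}2\cdot 2^{-(j+2)}\;=\;\frac{5}{4}\;>\;1,
\end{equation*}
which contradicts Kraft's inequality, completing the argument.

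The delicate point is that integrality of codeword lengths is essential. If one only used the real-valued bound $L_{\mathcal{C}}(m)\leq b+1+b\lfloor\log_{2}m\rfloor$ and summed geometrically, one would obtain the lower bound $\tfrac{1}{2}+\tfrac{2^{-2b}}{1-2^{1-b}}$, which drops below $1$ for $b$ noticeably smaller than $\tfrac{3}{2}$ (the crossover is near $b\approx 1.45$). Rounding each permitted length down to an integer \emph{block by block} is what recovers the sharp threshold $b=\tfrac{3}{2}$, and keeping the odd/even case split straight is the main bit of bookkeeping required.
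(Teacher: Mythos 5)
Your proposal is correct and takes essentially the same approach as the paper: both lower-bound the Kraft sum by rounding the permitted length $b+1+bk$ down to an integer on each dyadic block $\lfloor\log_2 m\rfloor=k$ and derive a contradiction with Kraft's inequality (your block bounds for $k=1,2,3$ coincide exactly with the paper's bounds $3,5,6$ for $m\le 15$). The only difference is that the paper stops at $m\le 15$, where the partial sum already equals $1$, and appeals to the strictly positive tail, whereas you sum over all blocks and obtain the slightly stronger bound $\tfrac{5}{4}>1$.
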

\begin{proof}
Suppose there is a prefix code $\mathcal{C}$ to meet the requirement.
\begin{enumerate}
\item For $m=2,3$, $L_{\mathcal{C}}(m)\leq 2b+1 <4$. Thus, $L_{\mathcal{C}}(m)\leq3$.
\item For $m=4,5,6,7$, $L_{\mathcal{C}}(m)\leq 3b+1 <\frac{11}{2}$. Thus, $L_{\mathcal{C}}(m)\leq5$.
\item For $m=8,9,\cdots,15$, $L_{\mathcal{C}}(m)\leq 4b+1 <7$. Thus, $L_{\mathcal{C}}(m)\leq6$.
\end{enumerate}
Thus, we have
\begin{equation}\label{eq23}
\begin{aligned}
  \sum_{m=1}^{\infty} \frac{1}{2^{L_{\mathcal{C}}(m)}}
              & = \sum_{m=1}^{16}\frac{1}{2^{L_{\mathcal{C}}(m)}} + \sum_{m=17}^{\infty}\frac{1}{2^{L_{\mathcal{C}}(m)}}  \\
              &\geq \frac{1}{2}+2\times\frac{1}{2^{3}}+4\times\frac{1}{2^{5}}+8\times\frac{1}{2^{6}}+\sum_{m=17}^{\infty}\frac{1}{2^{L_{\mathcal{C}}(m)}}  \\
              & =1+\sum_{m=17}^{\infty}\frac{1}{2^{L_{\mathcal{C}}(m)}}  \\
              & > 1.             \\
\end{aligned}
\end{equation}
This contradicts the Kraft's inequality~\cite{kraft}
\begin{equation}\label{eq12}
  \sum_{m=1}^{\infty} \frac{1}{2^{L_{\mathcal{C}}(m)}}\leq 1,
\end{equation}
so there is no such prefix code $\mathcal{C}$.
\end{proof}
\section{$\iota$ code to achieve $K_{\mathcal{C}}=2.5$}\label{sec_lb}
In this section, we provide a new UCI, termed $\iota$ code, to achieve $K_{\mathcal{C}}=2.5$.
First, we introduce some necessary notations.
Let $\alpha(m)$ be $m$ bits zeros followed by a single one, for all $m\in\mathcal{N}$.
Let $\beta(m)$ be the binary representation of $m\in\mathcal{N}$.
Let $[\beta(m)]$ be the binary string that removes the most significant bit one of $\beta(m)$.
For example, $\alpha(3)=0001$, $\beta(9)=1001$ and $[\beta(9)]=001$.
Let $\{0,1\}^{*}$ be a set containing all finite binary strings.

Next, the following defines an auxiliary code $\widetilde{\alpha}: \mathcal{N}\rightarrow \{0,1\}^{*}$.
\begin{equation}
\widetilde{\alpha}(m)=\left\{\begin{array}{lll}
1,                                         &\text{if } m=1,   \\
\alpha(\frac{m}{2})0,  &\text{if } m\geq 2 \text{ and }m \text{ is even,} \\
\alpha(\frac{m-1}{2})1,                   &\text{otherwise,}   \\
\end{array}\right.
\end{equation}
for all $m\in \mathcal{N}$.
Further, we define $\iota: \mathcal{N}\rightarrow \{0,1\}^{*}$ below.
\begin{equation}
\iota(m)=\widetilde{\alpha}(|\beta(m)|)[\beta(m)],
\end{equation}
for all $m\in \mathcal{N}$.
To better understand both codes, Table \ref{tab1} lists their first $16$ codewords.
\begin{table}[t]
\centering
\caption{The first $16$ codewords of $\widetilde{\alpha}$ code and $\iota$ code}\label{tab1}
\begin{tabular}{|c|c|c|}
\hline
$n$  &   $\widetilde{\alpha}$ code  & $\iota$ code   \\
\hline
$1$   &    1        &  1           \\
$2$   &    01 0      &  010 0      \\
$3$   &    01 1      &  010 1      \\
$4$   &    001 0    &  011 00     \\
$5$   &    001 1    &  011 01   \\
$6$   &    0001 0   &  011 10     \\
$7$   &    0001 1   &  011 11       \\
$8$   &    00001 0  &  0010 000    \\
$9$   &    00001 1  &  0010 001    \\
$10$  &    000001 0  &  0010 010    \\
$11$   &   000001 1  &  0010 011    \\
$12$   &   0000001 0  &  0010 100    \\
$13$   &   0000001 1  &  0010 101    \\
$14$   &   00000001 0  &  0010 110    \\
$15$   &   00000001 1  &  0010 111     \\
$16$   &   000000001 0 &  0011 0000   \\
\hline
\end{tabular}
\end{table}
From the definition, one can see that both code are prefix codes,
and the decoding algorithm naturally corresponds.

Then, we analyze the $K_{\iota}^{*}$ of $\iota$ code.
We obtain $L_{\iota}(1)=1$ and
\begin{equation}
\begin{aligned}
L_{\iota}(m)&=|\widetilde{\alpha}(1+\lfloor\log_{2}m\rfloor)|+\lfloor\log_{2}m\rfloor  \\
           &=2+\lfloor \frac{1+\lfloor\log_{2}m\rfloor}{2} \rfloor+\lfloor\log_{2}m\rfloor         \\
           &\leq \frac{3}{2}\lfloor\log_{2}m\rfloor+\frac{5}{2},
\end{aligned}
\end{equation}
for all $2\leq m\in \mathcal{N}$. Thus, we know that $\iota$ code is a UCI and $K_{\iota}^{*}\leq 2.5$ due to Theorem \ref{thm4}.
We consider the probability distribution $\overline{P}=(\frac{1}{2},\frac{1}{2})$, and we obtain
\begin{equation}
\frac{E_{\overline{P}}(L_{\iota})}{\max\{1,H(\overline{P})\}}=2.5.
\end{equation}
Thus, $K_{\iota}^{*}\geq 2.5$. Further, we have $K_{\iota}^{*}=2.5$.
We find the frist UCI such that $K_{\mathcal{C}}=2.5<2.75$.
This means that the range of $K_{\mathcal{C}}^{*}$ of the optimal UCI is improved to $2\leq K_{\mathcal{C}}^{*}\leq 2.5$.

Finally, we show that $\iota$ code is not asymptotically optimal.
We obtain $L_{\iota}(1)=1+\frac{3}{2}\lfloor\log_{2}1\rfloor$ and
\begin{equation}
\begin{aligned}
L_{\iota}(m)&=2+\lfloor \frac{1+\lfloor\log_{2}m\rfloor}{2} \rfloor+\lfloor\log_{2}m\rfloor         \\
           &> 1+\frac{3}{2}\lfloor\log_{2}m\rfloor,     \\
\end{aligned}
\end{equation}
for all $2\leq m\in \mathcal{N}$.
Due to Theorem \ref{thm3} and $L_{\iota}(m)\geq 1+\frac{3}{2}\lfloor\log_{2}m\rfloor$, for all $m\in \mathcal{N}$, $\iota$ code is not asymptotically optimal.
\section{A family of asymptotically optimal UCIs}\label{sec_new1}
In this section, we introduce a family of asymptotically optimal UCIs.
To better understand this family of asymptotically optimal UCIs, we first introduce a representative UCI in this family.
\subsection{$\kappa$ code to achieve $K_{\mathcal{C}}=\frac{8}{3}$}
In this subsection, we present an asymptotically optimal UCI, termed $\kappa$ code, to achieve $K_{\mathcal{\kappa}}=\frac{8}{3}<3.5$.
Notably, $\kappa$ code is a special case of a family of asymptotically optimal UCIs that will be introduced in the next subsection.

First, we define an auxiliary code $\widetilde{\gamma}: \mathcal{N}\rightarrow \{0,1\}^{*}$ below.
\begin{equation}
\widetilde{\gamma}(m)=\left\{\begin{array}{ll}
\widetilde{\alpha}(m),                     &\text{if } m<4,\\
\alpha(|\beta(m-2)|)[\beta(m-2)],                    &\text{otherwise,}   \\
\end{array}\right.
\end{equation}
for all $m\in \mathcal{N}$.
Further, we define $\kappa: \mathcal{N}\rightarrow \{0,1\}^{*}$ below.
\begin{equation}
\kappa(m)=\widetilde{\gamma}(|\beta(m)|)[\beta(m)],
\end{equation}
for all $m\in \mathcal{N}$.
Table \ref{tab2} lists some codewords for $\widetilde{\gamma}$ code and $\kappa$ code.
\begin{table}[t]
\centering
\caption{Some codewords of $\widetilde{\gamma}$ code and $\kappa$ code}\label{tab2}
\begin{tabular}{|c|c|c|}
\hline
$n$  &   $\widetilde{\gamma}$ code  & $\kappa$ code   \\
\hline
$1$   &    1        &  1           \\
$2$   &    01 0      &  010 0      \\
$3$   &    01 1      &  010 1      \\
$4$   &    001 0    &  011 00     \\
$5$   &    001 1    &  011 01   \\
$6$   &    0001 00   &  011 10     \\
$7$   &    0001 01   &  011 11       \\
$8$   &    0001 10  &  0010 000    \\
$9$   &    0001 11  &  0010 001    \\
$10$  &    00001 000  &  0010 010    \\
$11$   &   00001 001  &  0010 011    \\
$12$   &   00001 010  &  0010 100    \\
$20$   &   000001 0010  &  0011 0100    \\
$50$   &   0000001 10000    & 000100 10010    \\
$100$   &  00000001 100010  & 000101 100100     \\
\hline
\end{tabular}
\end{table}
From definitions, we know that $\widetilde{\gamma}$ code and $\kappa$ code are prefix codes, and the decoding algorithm naturally corresponds.
Due to the definition of $\widetilde{\gamma}$ code and $\kappa$ code,
we obtain
\begin{equation}
L_{\widetilde{\gamma}}(m)=\left\{\begin{array}{lll}
1,                                &\text{if } m=1,   \\
3,                                &\text{if } 2\leq m \leq 3,  \\
2+2\lfloor\log_{2}(m-2)\rfloor,   &\text{otherwise,}   \\
\end{array}\right.
\end{equation}
and
\begin{equation}
L_{\kappa}(m)=\left\{\begin{array}{llll}
1,                                &\text{if } m=1,   \\
4,                                &\text{if } 2\leq m \leq 3,  \\
5,                                &\text{if } 4\leq m \leq 7,  \\
2+\lfloor\log_{2}m\rfloor+2\lfloor\log_{2}(\lfloor\log_{2}m\rfloor-1)\rfloor,   &\text{otherwise.}   \\
\end{array}\right.
\end{equation}

Next, a lemma about the codeword length of $\kappa$ code is given.
\begin{lemma} \label{lemma3}
The codeword length of $\kappa$ code
\begin{equation}\label{eq51}
L_{\kappa}(m)\leq \frac{8}{3}+\frac{5}{3}\lfloor\log_{2}m\rfloor,
\end{equation}
for all $2\leq m\in\mathcal{N}$.
\end{lemma}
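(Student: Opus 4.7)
The plan is to handle the three pieces of the definition of $L_\kappa(m)$ separately, since the bound is linear in $\lfloor\log_2 m\rfloor$ while the codeword length has different forms on the intervals $\{2,3\}$, $\{4,\dots,7\}$, and $m\geq 8$.

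First I would dispose of the two small ranges by direct arithmetic: for $m\in\{2,3\}$ the right side equals $\tfrac{8}{3}+\tfrac{5}{3}=\tfrac{13}{3}>4=L_\kappa(m)$, and for $m\in\{4,\dots,7\}$ it equals $\tfrac{8}{3}+\tfrac{10}{3}=6>5=L_\kappa(m)$. These are just numerical checks and require no real work.

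The substantive case is $m\geq 8$. Here I would set $k\triangleq\lfloor\log_2 m\rfloor\geq 3$, so that $L_\kappa(m)=2+k+2\lfloor\log_2(k-1)\rfloor$. Subtracting the linear terms, the desired inequality is equivalent to
\begin{equation}
3\lfloor\log_2(k-1)\rfloor\leq k+1,\qquad k\geq 3.
\end{equation}
This is the heart of the lemma. The naive real-valued inequality $3\log_2(k-1)\leq k+1$ already fails near $k=3$ (it reads $3\leq 4$, but the slack is tiny), so the argument must genuinely exploit the floor.

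The cleanest route is to introduce $j\triangleq\lfloor\log_2(k-1)\rfloor$, which gives $k-1\geq 2^j$, hence $k\geq 2^j+1$. The target inequality then reduces to showing
\begin{equation}
2^j+1\geq 3j-1,\qquad\text{i.e.,}\qquad 2^j\geq 3j-2,
\end{equation}
for every $j\geq 1$. This last inequality is immediate from a one-line induction on $j$ (base $j=1$: $2\geq 1$; step: if $2^j\geq 3j-2$ then $2^{j+1}\geq 6j-4\geq 3j+1=3(j+1)-2$ for $j\geq 2$, with $j=1\to 2$ and $j=2\to 3$ checked by hand). Combining gives $k\geq 3j-1$, i.e., $3j\leq k+1$, which is exactly the reduced inequality. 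The main (mild) obstacle is the tightness at small $k$, and the floor-to-$2^j$ substitution is what converts a tight transcendental bound into a slack polynomial one that induction handles cleanly.
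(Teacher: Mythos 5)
Your proof is correct. The overall decomposition is exactly the paper's: direct numerical checks for $2\leq m\leq 7$, then a reduction of the $m\geq 8$ case to a floor--log inequality in $k=\lfloor\log_{2}m\rfloor$; indeed your reduced inequality $3\lfloor\log_{2}(k-1)\rfloor\leq k+1$ is precisely the paper's auxiliary inequality $\lfloor\log_{2}(x-1)\rfloor\leq\frac{1}{3}+\frac{1}{3}x$ multiplied by $3$. Where you genuinely differ is in how that inequality is established. The paper verifies $x=3,4,5$ and then argues by a ``staircase'' increment count: each unit jump of the left side requires $x$ to grow by at least $4$, while the right side then grows by $\frac{4}{3}>1$. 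You instead substitute $j=\lfloor\log_{2}(k-1)\rfloor$, use $k\geq 2^{j}+1$, and reduce everything to $2^{j}\geq 3j-2$, which you prove by a short induction (base cases $j=1,2$, step valid for $j\geq 2$). Your route is a bit more self-contained and mechanical --- the paper's increment argument leaves the ``left side increases by $1$ only when $x$ passes a power of two plus one'' step implicit --- at the cost of one extra change of variable; both arguments correctly hinge on the fact that the bound is tight at $k=5$ (equivalently $j=2$), so only the floor saves the inequality there.
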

\begin{proof}
We first prove an auxiliary inequality as follows.
\begin{equation}\label{eq52}
\lfloor \log_{2}(x-1)\rfloor\leq\frac{1}{3}+\frac{1}{3}x,
\end{equation}
for all $3\leq x\in \mathcal{N}$. When $x=3$ or $x=4$, we can verify directly.
When $x=5$, both sides of inequality \eqref{eq52} are $2$.
Hereafter, if the left side of inequality \eqref{eq52} is increased by $1$, then $x$ must be increased by at least $4$.
At the same time, the right side of inequality \eqref{eq52} is increased by at least $\frac{1}{3}\times4=\frac{4}{3}>1$. Thus, inequality \eqref{eq52} holds.
For inequality \eqref{eq51}, when $m\leq7$, we can verify directly.
When $m\geq8$, we obtain
\begin{equation}
\begin{aligned}
L_{\kappa}(m)&=2+\lfloor\log_{2}m\rfloor+2\lfloor\log_{2}(\lfloor\log_{2}m\rfloor-1)\rfloor   \\
             &\leq 2+\lfloor\log_{2}m\rfloor+2\times(\frac{1}{3}+\frac{1}{3}\lfloor\log_{2}m\rfloor)    \\
             &= \frac{8}{3}+\frac{5}{3}\lfloor\log_{2}m\rfloor.  \\
\end{aligned}
\end{equation}
\end{proof}

Finally, we propose the main theorem in this subsection.
\begin{theorem}
\begin{itemize}
\item[(1)] $2.5\leq K_{\kappa}^{*}\leq \frac{8}{3}$;
\item[(2)] $\kappa$ code is asymptotically optimal.
\end{itemize}
\end{theorem}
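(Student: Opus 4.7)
The plan is to handle the three sub-claims separately, reusing the machinery already assembled in the paper.

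For the upper bound $K_{\kappa}^{*}\leq \tfrac{8}{3}$, I would just invoke the main Theorem~\ref{thm4} with parameter $b=\tfrac{5}{3}$, which lies in the admissible range $[1,\tfrac{9}{4}]$. Lemma~\ref{lemma3} already gives $L_{\kappa}(m)\leq \tfrac{8}{3}+\tfrac{5}{3}\lfloor\log_{2}m\rfloor=b+1+b\lfloor\log_{2}m\rfloor$ for $m\geq 2$, and $L_{\kappa}(1)=1$ holds by definition. Hence Theorem~\ref{thm4} applies verbatim, so $\kappa$ is a UCI and $K_{\kappa}^{*}\leq b+1=\tfrac{8}{3}$.

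For the matching lower bound $K_{\kappa}^{*}\geq 2.5$, I would mimic the distribution used for $\iota$ code: take $\overline{P}=(\tfrac{1}{2},\tfrac{1}{2})$ supported on $\{1,2\}$. From Table~\ref{tab2} we read $L_{\kappa}(1)=1$ and $L_{\kappa}(2)=4$, so $E_{\overline{P}}(L_{\kappa})=\tfrac{1}{2}(1+4)=2.5$, while $H(\overline{P})=1$. Thus the ratio equals $2.5$, forcing $K_{\kappa}^{*}\geq 2.5$ and finishing part~(1).

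For part~(2), asymptotic optimality, the idea is to exhibit a function $R_{\kappa}$ with $R_{\kappa}(H(P))\to 1$. First, from the piecewise formula for $L_{\kappa}$ I would derive a uniform bound of the form
\begin{equation}
L_{\kappa}(m)\leq C+\log_{2}m+2\log_{2}\log_{2}m
\end{equation}
valid for all $m\geq 2$ (with a suitable absolute constant $C$; the cases $2\leq m\leq 7$ are absorbed into~$C$, and for $m\geq 8$ the bound follows by dropping floors in the expression $2+\lfloor\log_{2}m\rfloor+2\lfloor\log_{2}(\lfloor\log_{2}m\rfloor-1)\rfloor$). Next I would take expectations: inequality~\eqref{eq9} in the proof of Theorem~\ref{thm4} gives $E_{P}(\log_{2}m)\leq H(P)+P(1)\log_{2}P(1)\leq H(P)$, and concavity of $\log_{2}$ together with Jensen's inequality yields $E_{P}(\log_{2}\log_{2}m)\leq \log_{2}E_{P}(\log_{2}m)\leq \log_{2}H(P)$. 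Combining these produces
\begin{equation}
E_{P}(L_{\kappa})\leq C+H(P)+2\log_{2}H(P),
\end{equation}
so that
\begin{equation}
\frac{E_{P}(L_{\kappa})}{\max\{1,H(P)\}}\leq 1+\frac{C+2\log_{2}H(P)}{\max\{1,H(P)\}}\triangleq R_{\kappa}(H(P)),
\end{equation}
and the right-hand side tends to $1$ as $H(P)\to\infty$.

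The main obstacle is the asymptotic-optimality step: one must be careful that the Jensen inequality for $E_{P}(\log_{2}\log_{2}m)$ is only meaningful once $E_{P}(\log_{2}m)\geq 1$, and that the small-$m$ terms (where $\log_{2}\log_{2}m$ is ill-behaved) are legitimately absorbed into the constant $C$. Once a clean uniform upper bound $L_{\kappa}(m)\leq C+\log_{2}m+2\log_{2}\log_{2}m$ (valid for all $m\geq 2$, perhaps with $\log_{2}\log_{2}m$ replaced by $\max\{0,\log_{2}\log_{2}m\}$) is established, the rest is the standard Jensen-based argument used throughout the Elias-style literature.
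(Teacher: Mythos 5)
Your proposal is correct and follows essentially the same route as the paper: Theorem~\ref{thm4} with $b=\tfrac{5}{3}$ via Lemma~\ref{lemma3} for the upper bound, the distribution $\overline{P}=(\tfrac{1}{2},\tfrac{1}{2})$ for the lower bound, and a Jensen/concavity bound on the expected $\log_{2}\log_{2}$ term for asymptotic optimality. The only difference is the technicality you yourself flag: the paper sidesteps the small-$H(P)$ problem by including the $n=1$ term (with value $1$) inside the logarithm, obtaining $2\log_{2}\bigl(P(1)+\sum_{n\geq 2}P(n)\log_{2}n\bigr)\leq 2\log_{2}(1+H(P))$ and hence $T_{\kappa}(H(P))=5+H(P)+2\log_{2}(1+H(P))$, rather than your $2\log_{2}H(P)$, which would fail to upper-bound the ratio when $H(P)$ is small.
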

\begin{proof}
\begin{itemize}
\item[(1)] Due to Theorem \ref{thm4} and Lemma \ref{lemma3}, we know that $\kappa$ code is a UCI and $K_{\kappa}^{*}\leq \frac{8}{3}$.
We consider $\overline{P}=(\frac{1}{2},\frac{1}{2})$, and we obtain
\begin{equation}
\frac{E_{\overline{P}}(L_{\kappa})}{\max\{1,H(\overline{P})\}}=2.5.
\end{equation}
Thus, $K_{\kappa}^{*}\geq 2.5$. Further, we have $2.5\leq K_{\kappa}^{*}\leq \frac{8}{3}$.
\item[(2)] The expected codeword length is
\begin{equation}~\label{eq55}
\begin{aligned}
  E_{P}(L_{\kappa})&= P(1)+4(P(2)+P(3))+5\sum_{n=4}^{7}P(n)+\sum_{n=8}^{\infty}P(n)L_{\kappa}(n)    \\
                   &< 5+\sum_{n=8}^{\infty}P(n)\log_{2}n+2\sum_{n=8}^{\infty}P(n)\log_{2}(\log_{2}n)      \\
                   & \leq 5+\sum_{n=2}^{\infty}P(n)\log_{2}n+2\sum_{n=2}^{\infty}P(n)\log_{2}(\log_{2}n)     \\
                   &\overset{(a)}{\leq} 5+H(P)+P(1)\log_{2}P(1)+2\sum_{n=2}^{\infty}P(n)\log_{2}(\log_{2}n)     \\
                   &\leq 5+H(P)+2P(1)\log_{2}1+2\sum_{n=2}^{\infty}P(n)\log_{2}(\log_{2}n)     \\
                   &\overset{(b)}{\leq} 5+H(P)+2\log_{2}\left(P(1)+\sum_{n=2}^{\infty}P(n)\log_{2}n\right)   \\
                   &\leq T_{\kappa}(H(P))\triangleq 5+H(P)+2\log_{2}(1+H(P)),  \\
\end{aligned}
\end{equation}
where $(a)$ is due to inequality \eqref{eq9} and $(b)$ is due to the convexity of the logarithm.
Therefore, we have
\begin{equation}
\lim\limits_{H(P)\to +\infty}R_{\kappa}(H(P))=\lim\limits_{H(P)\to +\infty}\frac{T_{\kappa}(H(P))}{H(P)}=1.
\end{equation}
And hence, $\kappa$ code is asymptotically optimal.
\end{itemize}
\end{proof}
\subsection{A family of asymptotically optimal UCIs}
In this subsection, we propose a family of asymptotically Optimal UCIs, termed $\kappa[t]$ code, to further reduce the upper bound of $K_{\mathcal{C}}^{*}$. First, we provide the relevant definition.
For any given positive integer $t$, we define a family of auxiliary codes $\widetilde{\gamma}[t]: \mathcal{N}\rightarrow \{0,1\}^{*}$ as follows:
\begin{equation}
\widetilde{\gamma}[t](m)=\left\{\begin{array}{ll}
\widetilde{\alpha}(m),                                         &\text{if } m<2t,\\
\alpha(|\beta(m+2-2t)|+t-2)[\beta(m+2-2t)],                    &\text{otherwise,}   \\
\end{array}\right.
\end{equation}
for all $m\in \mathcal{N}$.
Further, we define $\kappa[t]: \mathcal{N}\rightarrow \{0,1\}^{*}$ as follows:
\begin{equation}
\kappa[t](m)=\widetilde{\gamma}[t](|\beta(m)|)[\beta(m)],
\end{equation}
for all $m\in \mathcal{N}$. Two points need to be explained here.
One is the prefix of $\widetilde{\gamma}[t]$ code.
The codeword of $\widetilde{\gamma}[t]$ code starts with a series of consecutive zeros followed by a one.
From the definition of $\widetilde{\gamma}[t]$ code, we know that $\widetilde{\gamma}[t](2t-1)$ starts with $t-1$ consecutive zeros followed by a one,
and $\widetilde{\gamma}[t](2t)$ starts with $t$ consecutive zeros followed by a one. Thus, $\widetilde{\gamma}[t]$ code a prefix code.
The prefix of $\widetilde{\gamma}[t]$ code guarantees the prefix of $\kappa[t]$ code.
Their decoding algorithm naturally corresponds.
The other is the special case of these two familys of codes.
When $t=1$, $\widetilde{\gamma}[1]$ code is essentially Elias $\gamma$ code and $\kappa[1]$ code is essentially Elias $\delta$ code.
When $t=2$, $\widetilde{\gamma}[2]$ code is essentially $\widetilde{\gamma}$ code and $\kappa[2]$ code is essentially $\kappa$ code.

Due to the definition of $\widetilde{\gamma}[t]$ code and $\kappa[t]$ code,
we obtain
\begin{equation}
L_{\widetilde{\gamma}[t]}(m)=\left\{\begin{array}{lll}
1,                                &\text{if } m=1,   \\
2+\lfloor\frac{m}{2}\rfloor,                                &\text{if } 2\leq m < 2t,  \\
t+2\lfloor\log_{2}(m+2-2t)\rfloor,   &\text{otherwise,}   \\
\end{array}\right.
\end{equation}
and
\begin{equation}
L_{\kappa[t]}(m)=\left\{\begin{array}{lll}
1,                                &\text{if } m=1,   \\
2+\lfloor\log_{2}m\rfloor+\lfloor\frac{1+\lfloor\log_{2}m\rfloor}{2}\rfloor,                            &\text{if } 2\leq m < 2^{2t-1},  \\
t+\lfloor\log_{2}m\rfloor+2\lfloor\log_{2}(\lfloor\log_{2}m\rfloor+3-2t)\rfloor,   &\text{otherwise.}   \\
\end{array}\right.
\end{equation}

Next, a lemma about the codeword length of $\kappa[t]$ code is given.
\begin{lemma} \label{lemma4}
The codeword length of $\kappa[t]$ code
\begin{equation}\label{eq61}
L_{\kappa[t]}(m)\leq \frac{5}{2}+\frac{1}{2t+2}+\left(\frac{3}{2}+\frac{1}{2t+2}\right)\lfloor\log_{2}m\rfloor,
\end{equation}
for all $2\leq m\in\mathcal{N}$.
\end{lemma}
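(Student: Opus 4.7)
The plan is to split on the two nontrivial branches in the definition of $L_{\kappa[t]}$ and compare each against the right-hand side of~\eqref{eq61}. Write $\ell:=\lfloor\log_2 m\rfloor$; since $\tfrac{3}{2}+\tfrac{1}{2t+2}=\tfrac{3t+4}{2t+2}$, the target is an affine bound in $\ell$, so it suffices to dominate each branch by an affine function of $\ell$.

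In the middle range $2\leq m<2^{2t-1}$, I would simply drop the inner floor:
\begin{equation*}
L_{\kappa[t]}(m)=2+\ell+\left\lfloor\frac{1+\ell}{2}\right\rfloor\leq \frac{5}{2}+\frac{3}{2}\,\ell,
\end{equation*}
which lies strictly below the target $\tfrac{5}{2}+\tfrac{1}{2t+2}+\bigl(\tfrac{3}{2}+\tfrac{1}{2t+2}\bigr)\ell$ because both the constant term and the slope in $\ell$ are larger on the right. This case is essentially free.

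In the large range $m\geq 2^{2t-1}$ we have $\ell\geq 2t-1$, so setting $y:=\ell+3-2t\geq 2$ and substituting $\ell=y+2t-3$ into $L_{\kappa[t]}(m)=t+\ell+2\lfloor\log_2 y\rfloor$, the inequality~\eqref{eq61} reduces, after routine algebra, to the auxiliary bound
\begin{equation*}
\lfloor\log_2 y\rfloor \leq \frac{t}{t+1}+\frac{t+2}{4(t+1)}\,y,\qquad y\geq 2,
\end{equation*}
which for $t=2$ specializes exactly to the inequality established inside the proof of Lemma~\ref{lemma3}. I would prove it by mimicking that doubling argument: verify the bound directly at the critical points $y=2$ and $y=4$ (equality occurring at $y=4$), and then observe that any unit jump of the step function $\lfloor\log_2 y\rfloor$ forces $y$ to at least double, so $y$ increases by at least $2^k\geq 4$, while the linear right-hand side gains at least $\frac{t+2}{t+1}>1$ across any such jump. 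This propagates the bound to all $y\geq 2$.

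The delicate point is the tight case $y=4$ (equivalently $\ell=2t+1$), where the auxiliary inequality is an \emph{equality}: this is precisely what pins down the particular constants $\tfrac{5}{2}+\tfrac{1}{2t+2}$ and $\tfrac{3}{2}+\tfrac{1}{2t+2}$ appearing in the statement, so the base cases must be checked with care — any looser constants, and the auxiliary bound would fail at $y=4$. Everything else is routine substitution and an easy per-jump comparison, so I do not expect additional difficulties.
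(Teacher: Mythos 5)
Your proposal is correct and follows essentially the same route as the paper: the middle range $2\leq m<2^{2t-1}$ is dispatched by dropping the inner floor to get $\tfrac{5}{2}+\tfrac{3}{2}\ell$, and the large range reduces to an auxiliary floor-of-log inequality proved by checking the base cases (with equality at $\ell=2t+1$, i.e.\ your $y=4$) and a per-jump doubling comparison; your substitution $y=\ell+3-2t$ is just a reparametrization of the paper's auxiliary inequality in $x=\lfloor\log_2 m\rfloor$. No gaps.
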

\begin{proof}
We first prove an auxiliary inequality as follows:
\begin{equation}\label{eq62}
t+2\lfloor \log_{2}(x+3-2t)\rfloor\leq\frac{5}{2}+\frac{1}{2t+2}+\left(\frac{1}{2}+\frac{1}{2t+2}\right)x,
\end{equation}
for all $2t-1\leq x\in \mathcal{N}$. When $x=2t-1$ or $x=2t$, we can verify directly.
When $x=2t+1$, both sides of inequality \eqref{eq62} are $4+t$.
Hereafter, if the left side of inequality \eqref{eq62} is increased by $2$, then $x$ must be increased by at least $4$.
At the same time, the right side of inequality \eqref{eq62} is increased by at least $\left(\frac{1}{2}+\frac{1}{2t+2}\right)\times4=2+\frac{2}{t+1}>2$. Thus, inequality \eqref{eq62} holds.
For inequality \eqref{eq61}, when $2\leq m<2^{2t-1}$, we have
\begin{equation}
\begin{aligned}
L_{\kappa[t]}(m)&=2+\lfloor\log_{2}m\rfloor+\lfloor\frac{1+\lfloor\log_{2}m\rfloor}{2}\rfloor   \\
                &\leq \frac{5}{2}+\frac{3}{2}\lfloor\log_{2}m\rfloor  \\
                & < \frac{5}{2}+\frac{1}{2t+2}+\left(\frac{3}{2}+\frac{1}{2t+2}\right)\lfloor\log_{2}m\rfloor. \\
\end{aligned}
\end{equation}
When $m\geq2^{2t-1}$, we obtain
\begin{equation}
\begin{aligned}
L_{\kappa[t]}(m)&=t+2\lfloor\log_{2}(\lfloor\log_{2}m\rfloor+3-2t)\rfloor+\lfloor\log_{2}m\rfloor  \\
             &\leq\frac{5}{2}+\frac{1}{2t+2}+\left(\frac{1}{2}+\frac{1}{2t+2}\right)\lfloor\log_{2}m\rfloor+\lfloor\log_{2}m\rfloor    \\
             &= \frac{5}{2}+\frac{1}{2t+2}+\left(\frac{3}{2}+\frac{1}{2t+2}\right)\lfloor\log_{2}m\rfloor.  \\
\end{aligned}
\end{equation}
\end{proof}

Finally, we propose the main theorem in this subsection.
\begin{theorem}
\begin{itemize}
\item[(1)] $2.5\leq K_{\kappa[t]}^{*}\leq 2.5+\frac{1}{2t+2}$;
\item[(2)] $\kappa[t]$ code is a family of asymptotically optimal UCIs.
\end{itemize}
\end{theorem}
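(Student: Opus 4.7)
My plan splits along the two claims. For the upper bound in claim~(1), I will apply Theorem~\ref{thm4} with the choice $b = \frac{3}{2} + \frac{1}{2t+2}$. With this $b$, the inequality supplied by Lemma~\ref{lemma4} reads exactly $L_{\kappa[t]}(m) \leq (b+1) + b\lfloor\log_{2}m\rfloor$ for all $m\geq 2$, which is the hypothesis of Theorem~\ref{thm4}; together with the fact $L_{\kappa[t]}(1)=1$ (immediate from the definition, since $|\beta(1)|=1$ and $\widetilde{\gamma}[t](1)=\widetilde{\alpha}(1)=1$), the only thing left is to check the feasibility condition $1\leq b\leq \frac{9}{4}$. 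Since $t\geq 1$, the value $b$ ranges over $(\tfrac{3}{2},\tfrac{7}{4}]$, which sits inside $[1,\tfrac{9}{4}]$, so Theorem~\ref{thm4} yields $K_{\kappa[t]}^{*}\leq b+1 = 2.5 + \frac{1}{2t+2}$.

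For the matching lower bound, I will evaluate the code on the two-point distribution $\overline{P}=(\tfrac{1}{2},\tfrac{1}{2})$, mirroring the test used for $\iota$ code and $\kappa$ code. A direct check of the piecewise formula for $L_{\kappa[t]}$ shows $L_{\kappa[t]}(2)=4$ for every $t\geq 1$ (the $t=1$ case uses the second branch, while $t\geq 2$ uses the first). Together with $L_{\kappa[t]}(1)=1$ and $H(\overline{P})=1$, this gives $E_{\overline{P}}(L_{\kappa[t]})/\max\{1,H(\overline{P})\} = \tfrac{1}{2}(1+4) = 2.5$, which forces $K_{\kappa[t]}^{*}\geq 2.5$.

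For claim~(2), the uniform bound from Lemma~\ref{lemma4} is useless because its slope $\tfrac{3}{2}+\tfrac{1}{2t+2}$ exceeds $1$, so Theorem~\ref{thm3} would preclude asymptotic optimality if that were all we used. Instead I will work directly with the exact formula on the tail branch $m\geq 2^{2t-1}$, where $L_{\kappa[t]}(m) = t + \lfloor\log_{2}m\rfloor + 2\lfloor\log_{2}(\lfloor\log_{2}m\rfloor+3-2t)\rfloor$. Absorbing the constant $3-2t$ into a $t$-dependent additive constant (for $t\geq 2$ it is nonpositive, and for $t=1$ one uses $\log_{2}(k+1)\leq 1+\log_{2}k$ when $k\geq 1$), I get $L_{\kappa[t]}(m)\leq C_{t} + \log_{2}m + 2\log_{2}(\log_{2}m)$. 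Then I will copy the template of \eqref{eq55}: split $E_{P}(L_{\kappa[t]})$ into the finite head $m<2^{2t-1}$, whose contribution is bounded by a $t$-dependent constant because only finitely many codeword lengths are involved, and the tail $m\geq 2^{2t-1}$; use inequality~\eqref{eq9} to replace $\sum P(n)\log_{2}n$ by $H(P)+P(1)\log_{2}P(1)\leq H(P)$; and push the outer logarithm across the $2\sum P(n)\log_{2}(\log_{2}n)$ contribution by concavity of $\log_{2}$, exactly as in step~$(b)$ of \eqref{eq55}. The outcome is an envelope $T_{\kappa[t]}(H(P)) = C_{t}' + H(P) + 2\log_{2}(1+H(P))$, whereupon $R_{\kappa[t]}(H(P)) \triangleq T_{\kappa[t]}(H(P))/H(P)\to 1$ as $H(P)\to\infty$, delivering asymptotic optimality.

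The only delicate spot is folding the finite-head contribution $m<2^{2t-1}$ into the constant $C_{t}'$ uniformly in $P$; this is purely bookkeeping, since the relevant lengths are all bounded by a $t$-dependent constant. The substantive structural work is identical to the $\kappa$-code argument, and no new analytic idea is needed beyond those already deployed in the proofs of Theorem~\ref{thm4} and the $\kappa$-code theorem.
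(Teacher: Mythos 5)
Your proposal is correct and takes essentially the same route as the paper: Theorem~\ref{thm4} with $b=\frac{3}{2}+\frac{1}{2t+2}$ combined with Lemma~\ref{lemma4} for the upper bound, the two-point distribution $\overline{P}=(\frac{1}{2},\frac{1}{2})$ for the lower bound $2.5$, and the \eqref{eq55}-style envelope $T_{\kappa[t]}(H(P))=C_t'+H(P)+2\log_{2}(1+H(P))$ for asymptotic optimality. The only cosmetic difference is that the paper dispatches $t=1$ by identifying $\kappa[1]$ with the Elias $\delta$ code and citing Elias, whereas you fold it into the general tail argument via $\log_{2}(k+1)\leq 1+\log_{2}k$; both work.
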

\begin{proof}
\begin{itemize}
\item[(1)] Due to Theorem \ref{thm4} and Lemma \ref{lemma4}, we know that $\kappa[t]$ code is a UCI and $K_{\kappa[t]}^{*}\leq \frac{5}{2}+\frac{1}{2t+2}$.
We consider $\overline{P}=(\frac{1}{2},\frac{1}{2})$, and we obtain
\begin{equation}
\frac{E_{\overline{P}}(L_{\kappa[t]})}{\max\{1,H(\overline{P})\}}=2.5.
\end{equation}
Thus, $K_{\kappa[t]}^{*}\geq 2.5$. Further, we have $2.5\leq K_{\kappa[t]}^{*}\leq 2.5+\frac{1}{2t+2}$.

\item[(2)] When $t=1$, Elias~\cite{Elias75} has proven it.
When $t\geq2$, we obtain the following inequality derivation similar to \eqref{eq55}.
\begin{equation}
\begin{aligned}
  E_{P}(L_{\kappa[t]})&= \sum_{n=1}^{\infty}P(n)L_{\kappa[t]}(n)    \\
                   &< L_{\kappa[t]}(2^{2t-1}-1) +\sum_{n=2^{2t-1}}^{\infty}P(n)\log_{2}n+2\sum_{n=2^{2t-1}}^{\infty}P(n)\log_{2}(\log_{2}n)      \\
                   & \leq 3t-1+\sum_{n=2}^{\infty}P(n)\log_{2}n+2\sum_{n=2}^{\infty}P(n)\log_{2}(\log_{2}n)     \\
                   &\overset{(a)}{\leq} T_{\kappa[t]}(H(P))\triangleq 3t-1+H(P)+2\log_{2}(1+H(P)),  \\
\end{aligned}
\end{equation}
where $(a)$ is due to \eqref{eq55}.
Therefore, we have
\begin{equation}
\lim\limits_{H(P)\to +\infty}R_{\kappa[t]}(H(P))=\lim\limits_{H(P)\to +\infty}\frac{T_{\kappa[t]}(H(P))}{H(P)}=1.
\end{equation}
Thus, $\kappa[t]$ code is a family of asymptotically optimal UCIs.
\end{itemize}
\end{proof}
When $t$ tends to infinity, the value of $K_{\kappa[t]}=\frac{5}{2}+\frac{1}{2t+2}$ can be infinitely close to $2.5$.
An interesting thing needs to be explained here.
When $t$ is no longer a fixed value and tends to infinity, we can essentially regard $\lim\limits_{t\to +\infty}\kappa[t]$ code as $\iota$ code.
But at this time, $\lim\limits_{t\to +\infty}\kappa[t]$ code is not asymptotically optimal.

\section{$K_{\mathcal{C}}^{*}$ of the Classic UCIs}\label{sec_new}
In this section, we provide a more precise range of $K_{\mathcal{C}}^{*}$ of the classic UCIs by Theorem \ref{thm4}.
The main results of this section are summarized as follows.
\begin{theorem} \label{thm6}
\begin{itemize}
\item[(1)] $\delta$ code is asymptotically optimal UCI and $2.5 \leq K_{\delta}^{*}\leq 2.75 $;
\item[(2)] $\omega$ code is asymptotically optimal UCI and $2.1 < K_{\omega}^{*}\leq3 $;
\item[(3)] $\eta$ code is UCI and $2.5 \leq K_{\eta}^{*}\leq \frac{8}{3}$;
\item[(4)] $\theta$ code is asymptotically optimal UCI and $2.5 \leq K_{\theta}^{*}\leq 2.8$.
\end{itemize}
\end{theorem}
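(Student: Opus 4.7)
The plan is to deduce each of the four claims by applying Theorem~\ref{thm4} with a tailored choice of $b$: $b=7/4$ for $\delta$, $b=2$ for $\omega$, $b=5/3$ for $\eta$, and $b=9/5$ for $\theta$. Each value lies in the admissible interval $[1,9/4]$, and every code in question satisfies $L_{\mathcal{C}}(1)=1$ (noted in Table~\ref{tab5}). The matching lower bounds---$2.5$ for $\delta,\eta,\theta$ and $2.1$ for $\omega$---as well as the asymptotic optimality status of each code are already in the literature and recorded in the same table, so the new content is confined to verifying the codeword-length inequality $L_{\mathcal{C}}(m)\leq b+1+b\lfloor\log_2 m\rfloor$ for every $m\geq 2$.

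For $\delta$, writing $k=\lfloor\log_2 m\rfloor$, the target reduces to $2\lfloor\log_2(1+k)\rfloor\leq 7/4+3k/4$. I would check $k\in\{1,2,3\}$ by hand and then observe that each jump of $2$ on the left forces $k$ to roughly double, which grows the right side by a comfortably larger amount. For $\eta$, the crude estimate $\lfloor\log_2(m-1)\rfloor\leq\lfloor\log_2 m\rfloor$ combined with $\lfloor x/2\rfloor\leq x/2$ yields $L_\eta(m)\leq 3+(3/2)\lfloor\log_2 m\rfloor$, which is dominated by $8/3+(5/3)\lfloor\log_2 m\rfloor$ as soon as $\lfloor\log_2 m\rfloor\geq 2$; the residual cases $m=2,3$ are immediate. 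For $\theta$, setting $k_1=\lfloor\log_2 m\rfloor$ and $k_2=\lfloor\log_2 k_1\rfloor$, the inequality becomes $k_2+\lfloor k_2/2\rfloor\leq (4k_1-1)/5$, which I would check blockwise over the dyadic intervals of $k_1$ on which $k_2$ is constant (the tight case $k_1=4$ gives equality).

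The main obstacle is the $\omega$-code bound. I would first extract the recursion $L_\omega(m)=1+\lambda(m)+L_\omega(\lambda(m))$ for $m\geq 2$, with $L_\omega(1)=1$, and then attempt strong induction on $m$ to prove $L_\omega(m)\leq 3+2\lambda(m)$. The induction step delivers $L_\omega(m)\leq \lambda(m)+4+2\lambda^2(m)$, so the target holds whenever $\lambda(m)\geq 1+2\lambda^2(m)$. This ancillary inequality fails precisely when $\lambda(m)\in\{2,4\}$, corresponding to the finite blocks $m\in\{4,\ldots,7\}$ and $m\in\{16,\ldots,31\}$; these must be handled by direct computation (one obtains $L_\omega=6$ and $L_\omega=11$, both within the target, the second attaining equality). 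Once these exceptional blocks are absorbed as extended base cases, the induction closes for all larger $m$ because $\lambda^2$ grows drastically more slowly than $\lambda$. I expect this small-case bookkeeping to be the most tedious but nowhere conceptually deep part of the argument.

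With each codeword-length bound in hand, Theorem~\ref{thm4} immediately yields $K^*_{\mathcal{C}}\leq b+1$, which is exactly the upper bound asserted in each of the four parts of Theorem~\ref{thm6}.
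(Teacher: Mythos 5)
Your proposal is correct and follows essentially the same route as the paper: the paper's proof is exactly the reduction you describe, establishing $L_{\mathcal{C}}(m)\leq b+1+b\lfloor\log_2 m\rfloor$ with the identical constants $b=7/4,\,2,\,5/3,\,9/5$ (its Lemma~\ref{lemma2}) and then invoking Theorem~\ref{thm4}, with the lower bounds and asymptotic-optimality statements imported from prior work. Your treatments of $\delta$, $\eta$, and $\theta$ match the paper's almost line for line. The only genuine divergence is the $\omega$ bound: you prove $L_\omega(m)\leq 3+2\lambda(m)$ by strong induction on the recursion $L_\omega(m)=1+\lambda(m)+L_\omega(\lambda(m))$, absorbing the blocks $\lambda(m)\in\{2,4\}$ (where $\lambda(m)\geq 1+2\lambda^2(m)$ fails) as extended base cases, whereas the paper argues by cases on the recursion depth $s$ and sums the per-level bounds $\lambda^t(m)+1\leq \lfloor\log_2 m\rfloor/2^{t-1}$ into a geometric series; your induction is arguably cleaner and isolates the tight case $m\in\{16,\ldots,31\}$ explicitly, while the paper's version makes the $3+(2-2^{1-s})\lfloor\log_2 m\rfloor$ structure visible. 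Both close, and your exceptional-case computations ($L_\omega=6$ on $\{4,\ldots,7\}$ and $L_\omega=11$ on $\{16,\ldots,31\}$) are correct.
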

From Table \ref{tab5}, we only need to prove that $K_{\delta}^{*}\leq 2.75 $, $K_{\omega}^{*}\leq3 $, $K_{\eta}^{*}\leq \frac{8}{3}$ and $K_{\theta}^{*}\leq 2.8$.
We first prove the following lemma.
\begin{lemma} \label{lemma2}
For all $2\leq m \in\mathcal{N}$, we obtain
\begin{itemize}
\item[(1)] $L_{\delta}(m)\leq 2.75 + 1.75\lfloor\log_{2}m\rfloor$;
\item[(2)] $L_{\omega}(m)\leq 3 + 2\lfloor\log_{2}m\rfloor$;
\item[(3)] $L_{\eta}(m)\leq \frac{8}{3} + \frac{5}{3}\lfloor\log_{2}m\rfloor$;
\item[(4)] $L_{\theta}(m)\leq 2.8 + 1.8\lfloor\log_{2}m\rfloor$.
\end{itemize}
\end{lemma}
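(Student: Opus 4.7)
The plan is to handle the four parts by a single recurring template, the one already used for Lemmas \ref{lemma3} and \ref{lemma4}: reduce each claim to a floor-log inequality in the variable $k\triangleq\lfloor\log_{2}m\rfloor$ (or in a nested iterated logarithm), verify a small number of base values, and then argue that the discrete jumps of the nested floor terms are outpaced by the linear right-hand side. The equality case in each part turns out to sit at $k$ equal to (or one less than) a small power of two, so once the tight case is identified the inequality propagates automatically.

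For part (1), I would write $L_{\delta}(m)=1+k+2\lfloor\log_{2}(1+k)\rfloor$ and reduce the claim to $2\lfloor\log_{2}(1+k)\rfloor\leq 7/4+3k/4$ for $k\geq 1$. The nested floor jumps by $2$ only when $k$ crosses $2^{j}-1$, so I would check $k=1,2,3$ directly (with $k=3$ tight) and then observe that between consecutive jumps the right-hand side grows by more than $2$, mirroring the step argument used for \eqref{eq52} and \eqref{eq62}. Part (3) is the cleanest: since $\lfloor\log_{2}(m-1)\rfloor\leq k$, we have $L_{\eta}(m)\leq 3+k+\lfloor k/2\rfloor\leq 3+3k/2$, and the target inequality $3+3k/2\leq 8/3+5k/3$ reduces to $k\geq 2$, so only $k=1$ (i.e., $m\in\{2,3\}$) needs a hand check. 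Part (4) is analogous: setting $j=\lfloor\log_{2}k\rfloor$ we have $L_{\theta}(m)=3+k+j+\lfloor j/2\rfloor\leq 3+k+3j/2$, and the remaining inequality $3j/2+1/5\leq 4k/5$ is tight at $(k,j)=(4,2)$ and holds for $k\geq 2^{j}$ with $j\geq 2$; the cases $j\in\{0,1\}$ (i.e., $k\leq 3$) are verified directly.

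Part (2) is the main difficulty because $L_{\omega}$ depends on the entire chain $\lambda(m),\lambda^{2}(m),\ldots,\lambda^{s}(m)=1$ rather than on a single nested floor. I would let $d(k)$ denote the depth at which $\lambda^{d(k)}(k)=1$ and introduce
\begin{equation}
G(k)\triangleq d(k)+\sum_{i=0}^{d(k)}\lambda^{i}(k),
\end{equation}
so that $L_{\omega}(m)=2+G(\lambda(m))$, and the claim becomes $G(k)\leq 1+2k$ for all $k\geq 1$. This is well-suited to strong induction on $k$ via the recurrence $G(k)=1+k+G(\lambda(k))$: the inductive step yields $G(k)\leq 2+k+2\lambda(k)$, which is bounded by $1+2k$ as soon as $k\geq 1+2\lfloor\log_{2}k\rfloor$, a condition that one can check holds for every $k\geq 5$. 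The values $k\in\{1,2,3,4\}$ are verified by direct computation, with equality attained at $k=4$.

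The real hazard is exactly the induction for part (2): not because it is deep, but because the chain $\lambda,\lambda^{2},\ldots$ invites off-by-one mistakes around the base of the recursion (in particular, keeping straight the relation between $s$, $d(k)$, and the indexing of the sum when one peels off a single layer of $\lambda$). Once those details are pinned down, the other three parts are routine applications of the same monotonicity trick used in Lemmas \ref{lemma3} and \ref{lemma4}.
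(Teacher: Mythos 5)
Your proposal is correct. Parts (1), (3) and (4) follow the paper's own route almost verbatim: each is reduced to a single floor--log inequality in $k=\lfloor\log_{2}m\rfloor$ (your $2\lfloor\log_{2}(1+k)\rfloor\leq \tfrac{7}{4}+\tfrac{3}{4}k$ is the paper's inequality \eqref{eq30} doubled, and your $0.2+1.5j\leq 0.8k$ is exactly \eqref{eq45}), with the same tight base cases and the same ``jump of the floor versus slope of the line'' step argument. Part (2) is where you genuinely diverge. The paper splits on the recursion depth $s$ (cases $s\leq 2$, $s=3$, $s\geq 4$) and proves termwise bounds $\lambda^{n}(m)+1\leq 2^{-(n-1)}\lfloor\log_{2}m\rfloor$ for the deeper levels (inequalities \eqref{eq37} and \eqref{eq39}, with tight points at the tower numbers $a_{t}$), then sums the resulting geometric series to reach $3+(2-2^{-(s-1)})\lfloor\log_{2}m\rfloor$. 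You instead absorb the entire tail into $G(k)=d(k)+\sum_{i=0}^{d(k)}\lambda^{i}(k)$, observe $L_{\omega}(m)=2+G(\lfloor\log_{2}m\rfloor)$ and $G(k)=1+k+G(\lambda(k))$, and run a strong induction whose step requires only $1+2\lfloor\log_{2}k\rfloor\leq k$, valid for $k\geq 5$, leaving $k\in\{1,2,3,4\}$ to check by hand (with equality at $k=4$, matching $L_{\omega}(16)=11$). I checked the bookkeeping you flagged as the hazard: with $\lambda^{d(k)}(k)=1$ one indeed has $s=d(\lambda(m))+1$, so $L_{\omega}(m)=1+s+\sum_{n=1}^{s}\lambda^{n}(m)=2+G(\lambda(m))$, and $G(1),G(2),G(3),G(4)=1,4,5,9$ all satisfy $G(k)\leq 1+2k$. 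Your induction is shorter and avoids the paper's three-way case analysis; the paper's termwise version yields the marginally sharper intermediate bound with the $2-2^{-(s-1)}$ coefficient, but nothing downstream uses that extra precision.
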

\begin{proof}
\begin{itemize}
\item[(1)] We prove the following inequality
\begin{equation}\label{eq30}
\lfloor \log_{2}(1+x )\rfloor \leq 0.875+0.375x,
\end{equation}
for all $x\in \mathcal{N}$. When $x\leq2$, we can verify directly.
When $x=3$, both sides of inequality \eqref{eq30} are $2$.
Hereafter, if the left side of inequality \eqref{eq30} is increased by $1$, then $x$ must be increased by at least $4$.
At the same time, the right side of inequality \eqref{eq30} is increased by at least $0.375\times4=1.5>1$. Thus, inequality \eqref{eq30} holds.
Further, we obtain
\begin{equation}
\begin{aligned}
L_{\delta}(m)&=1 + \lfloor \log_{2}m\rfloor + 2\lfloor \log_{2}(1+\lfloor\log_{2}m\rfloor)\rfloor         \\
             &\overset{(a)}{\leq} 1 + \lfloor \log_{2}m\rfloor + 2(0.875+0.375\lfloor \log_{2}m\rfloor )  \\
             &= 2.75 + 1.75\lfloor\log_{2}m\rfloor,   \\
\end{aligned}
\end{equation}
for all $2\leq m \in\mathcal{N}$, where $(a)$ is due to inequality \eqref{eq30}.
\item[(2)]  Our objective is to prove that
\begin{equation}
L_{\omega}(m)=1+\sum_{n=1}^{s}(\lambda^{n}(m)+1)\leq 3 + 2\lfloor\log_{2}m\rfloor,
\end{equation}
for all $2\leq m \in\mathcal{N}$. Let $a_{1}\triangleq2$ and $a_{m+1}\triangleq2^{a_{m}}$ for all $m \in\mathcal{N}$.
When $s\leq2$; that is, $a_{1}=2\leq m <16=a_{3}$, we can verify directly.
When $s=3$; that is, $a_{3}=16\leq m <65536=a_{4}$, since
\begin{equation}\label{eq33}
\lfloor \log_{2}x\rfloor \leq \frac{1}{2}x,
\end{equation}
for all $x \in\mathcal{N}$ and
\begin{equation}
\lfloor\log_{2}\lfloor \log_{2}x\rfloor\rfloor+1 \leq \frac{1}{2}x,
\end{equation}
for all $2\leq x \in\mathcal{N}$, we obtain
\begin{equation}
\begin{aligned}
L_{\omega}(m)&=3 + \lfloor \log_{2}m\rfloor + \lfloor \log_{2}\lfloor\log_{2}m\rfloor\rfloor+ (\lfloor \log_{2}\lfloor \log_{2}\lfloor \log_{2}m\rfloor\rfloor\rfloor+1)     \\
             &\leq 3 + \lfloor \log_{2}m\rfloor + \frac{1}{2}\lfloor \log_{2}m\rfloor+\frac{1}{2}\lfloor \log_{2}m\rfloor   \\
             &= 3 + 2\lfloor\log_{2}m\rfloor,
\end{aligned}
\end{equation}
for all $a_{3}\leq m <a_{4}$.
When $s\geq4$; that is, $m\geq a_{4}$, we consider the following three inequalities.
\begin{enumerate}
\item[2.1)] We have
\begin{equation}\label{eq36}
\begin{aligned}
\lambda^{2}(m)+1&=\lfloor \log_{2}\lfloor\log_{2}m\rfloor\rfloor+1    \\
                &\overset{(a)}{\leq}\frac{1}{2}\lfloor\log_{2}m\rfloor+1           \\
\end{aligned}
\end{equation}
for all $2\leq m\in\mathcal{N}$, where $(a)$ is due to inequality \eqref{eq33}.
\item[2.2)] We prove the following inequality
\begin{equation}\label{eq37}
\lambda^{3}(m)+1 \leq \frac{1}{4}\lfloor\log_{2}m\rfloor,
\end{equation}
for all $a_{4}\leq m\in \mathcal{N}$. When $m=a_{4}$, we obtain
\begin{equation}
3=\lambda^{3}(a_{4})+1<\frac{1}{4}\lfloor\log_{2}a_{4}\rfloor=4.
\end{equation}
Hereafter, if the left side of inequality \eqref{eq37} is increased by $1$, then $m$ must be increased by at least $2^{2^{2^{3}}}-2^{2^{2^{2}}}=2^{256}-2^{16}$.
At the same time, the right side of inequality \eqref{eq37} is increased by at least $\frac{1}{4}(2^{256}-2^{16})>1$. Thus, inequality \eqref{eq37} holds.
\item[2.3)] We prove the following inequality
\begin{equation}\label{eq39}
\lambda^{t}(m)+1 \leq \frac{1}{2^{t-1}}\lfloor\log_{2}m\rfloor,
\end{equation}
for all $a_{t}\leq m\in \mathcal{N}$, where $t$ is any given integer greater than or equal to $4$. When $m=a_{t}$, due to
\begin{equation}
\lambda^{t}(a_{t})=\lambda^{t-1}(a_{t-1})=\cdots=\lambda(a_{1})=1,
\end{equation}
we obtain
\begin{equation}
2=\lambda^{t}(a_{t})+1=\frac{1}{2^{3}}a_{3}\leq\frac{1}{2^{t-1}}a_{m-1}=\frac{1}{2^{t-1}}\lfloor\log_{2}a_{t}\rfloor.
\end{equation}
Hereafter, if the left side of inequality \eqref{eq39} is increased by $1$, then $m$ must be increased by at least $a_{t+1}-a_{t}$.
At the same time, the right side of inequality \eqref{eq39} is increased by at least
\begin{equation}
\begin{aligned}
&\frac{1}{2^{t-1}}(\lfloor\log_{2}a_{t+1}\rfloor-\lfloor\log_{2}a_{t}\rfloor)\\
=&\frac{1}{2^{t-1}}(a_{t}-a_{t-1})\geq \frac{1}{2^{3}}(a_{4}-a_{3})>1.
\end{aligned}
\end{equation}
Thus, inequality \eqref{eq39} holds.
\end{enumerate}
Due to inequality \eqref{eq36}, \eqref{eq37} and \eqref{eq39}, we obtain
\begin{equation}
\begin{aligned}
L_{\omega}(m)&=2+\lfloor\log_{2}m\rfloor+\sum_{n=2}^{s}(\lambda^{n}(m)+1)    \\
             &\leq 2+\lfloor\log_{2}m\rfloor+1+\sum_{n=1}^{s-1}\frac{\lfloor\log_{2}m\rfloor}{2^{n}}     \\
             &=3+(2-\frac{1}{2^{s-1}})\lfloor\log_{2}m\rfloor   \\
             &<3+2\lfloor\log_{2}m\rfloor.   \\
\end{aligned}
\end{equation}
\item[(3)]
When $m\leq3$, we can verify directly.
When $m\geq4$, due to $\lfloor\log_{2}m\rfloor\geq2$, we have
\begin{equation}
\begin{aligned}
L_{\eta}(m)&=3+\lfloor\log_2(m-1)\rfloor+\lfloor \frac{\lfloor\log_2(m-1)\rfloor}{2} \rfloor    \\
             &\leq \frac{8}{3}+\frac{1}{6}\times2+\frac{3}{2}\lfloor\log_{2}m\rfloor    \\
             &\leq \frac{8}{3}+\frac{1}{6}\lfloor\log_{2}m\rfloor +\frac{3}{2}\lfloor\log_{2}m\rfloor  \\
             &=\frac{8}{3} + \frac{5}{3}\lfloor\log_{2}m\rfloor.   \\
\end{aligned}
\end{equation}
\item[(4)]
We prove the following inequality
\begin{equation}\label{eq45}
0.2+1.5\lfloor \log_{2}x\rfloor \leq 0.8x,
\end{equation}
for all $3\leq x\in \mathcal{N}$. When $x=3$, we can verify directly.
When $x=4$, both sides of inequality \eqref{eq45} are $3.2$.
Hereafter, if the left side of inequality \eqref{eq45} is increased by $1.5$, then $x$ must be increased by at least $4$.
At the same time, the right side of inequality \eqref{eq45} is increased by at least $0.8\times4=3.2>1.5$. Thus, inequality \eqref{eq45} holds.
For $L_{\theta}(m)\leq 2.8 + 1.8\lfloor\log_{2}m\rfloor$, when $m\leq7$, we can verify directly.
When $m\geq8$, we obtain
\begin{equation}
\begin{aligned}
L_{\theta}(m)&=3+\lfloor \log_{2}m\rfloor+\lfloor\log_{2}\lfloor \log_{2}m\rfloor \rfloor+\lfloor\frac{\lfloor\log_{2}\lfloor \log_{2}m\rfloor \rfloor}{2}\rfloor   \\
             &\leq 3+\lfloor \log_{2}m\rfloor+1.5\lfloor\log_{2}\lfloor \log_{2}m\rfloor \rfloor     \\
             &=2.8+\lfloor \log_{2}m\rfloor+(0.2+1.5\lfloor\log_{2}\lfloor \log_{2}m\rfloor \rfloor)  \\
             &\leq 2.8 + 1.8\lfloor\log_{2}m\rfloor.   \\
\end{aligned}
\end{equation}
\end{itemize}
\end{proof}

Due to Lemma \ref{lemma2} and Theorem \ref{thm4}, we have $K_{\delta}^{*}\leq 2.75 $, $K_{\omega}^{*}\leq3 $, $K_{\eta}^{*}\leq \frac{8}{3}$ and $K_{\theta}^{*}\leq 2.8$.
Furthermore, Theorem \ref{thm6} is proved.
\begin{table}[t]
\centering
\caption{The latest research results for $K_{\mathcal{C}}^{*}$ of some UCIs}\label{tab3}
\begin{tabular}{|c|c|c|}
\hline
Code  & The range of $K_{\mathcal{C}}^{*}$ &  Asymptotically optimal   \\
\hline
$\gamma$ code  &     $K_{\gamma}^{*}=3$               &  No   \\
$\eta$ code    &   $ 2.5 \leq K_{\eta}^{*}\leq \frac{8}{3} $ &  No   \\
$\iota$ code  &     $K_{\iota}^{*}=2.5$               &  No   \\
$\delta$ code  &    $2.5 \leq K_{\delta}^{*}\leq 2.75 $  &  Yes  \\
$\omega$ code  &    $2.1 < K_{\omega}^{*}\leq3 $    &  Yes  \\
$\theta$ code   &   $ 2.5 \leq K_{\theta}^{*}\leq 2.8$ &  Yes  \\
$\kappa$ code   &   $ 2.5 \leq K_{\kappa}^{*}\leq \frac{8}{3}$ &  Yes  \\
$\kappa[t]$ code   &   $ 2.5 \leq K_{\kappa[t]}^{*}\leq 2.5+\frac{1}{2t+2}$ &  Yes  \\
\hline
\end{tabular}
\end{table}

From Theorem \ref{thm6}, $K_{\iota}^{*}=2.5$ and $2.5\leq K_{\kappa[t]}^{*}\leq 2.5+\frac{1}{2t+2}$,
we obtain Table \ref{tab3} to compare the expansion factor between our $\iota$ code, $\kappa[t]$ code and the classic UCIs previously proposed.
Currently, only $\iota$ code can achieve $K_{\mathcal{\iota}}=2.5$.
For asymptotically optimal UCIs, the current best result is that $\kappa[t]$ code can achieve $K_{\kappa[t]}=2.5+\frac{1}{2t+2}$, for all $t\in\mathcal{N}$.

\section{Conclusions}\label{sec_con}
In this paper, we study the expansion factor of UCI further, and Table \ref{tab3} summarizes the work of this paper.
From Table \ref{tab3}, the proposed $\iota$ code improves the expansion factor of optimal UCI to $K_{\mathcal{C}}=2.5$, and the proposed $\kappa[t]$ code improves the expansion factor of asymptotically optimal UCIs to $K_{\mathcal{C}}\Rightarrow 2.5$.
This work further reduces the range of the expansion factor to $2\leq K_{\mathcal{C}}^{*}\leq 2.5$.
There are several unresolved issues, as listed below.
\begin{enumerate}
\item one can see that the explicit value of $K_{\mathcal{C}}^{*}$ of the optimal UCI is still unknown.
\item $\omega$ code is the only UCI whose lower bound of $K_{\mathcal{C}}^{*}$ is less than $2.5$. Can $\omega$ code achieve $K_{\omega}<2.5$?
\end{enumerate}
\bibliographystyle{IEEEtran}
\bibliography{IEEEabrv,refs}

\end{document}